\newtheorem{definition}{Definition}
\newtheorem{lemma}{Lemma}
\newtheorem{theorem}{Theorem}
\newtheorem{proposition}{Proposition}
\begin{document}
\title{
Merlinization of complexity classes above BQP
} 
\author{Tomoyuki Morimae}
\email{morimae@gunma-u.ac.jp}
\affiliation{Department of Computer Science,
Gunma University, 1-5-1 Tenjincho Kiryushi
Gunma, 376-0052, Japan}
\author{Harumichi Nishimura}
\email{
hnishimura@is.nagoya-u.ac.jp
}
\affiliation{
Graduate School of Informatics,
Nagoya University,
Furocho, Chikusaku, Nagoya, Aichi,
464-8601, Japan
}

\begin{abstract}
We study how complexity classes above BQP,
such as postBQP, ${\rm postBQP}_{\rm FP}$, and SBQP,
change if we ``Merlinize" them, i.e., 
if we allow an extra input quantum state (or classical bit string) 
given by Merlin as witness.
Main results are the following three:
First, the Merlinized version of postBQP is equal to 
PSPACE.
Second, if the Merlinized postBQP is restricted
in such a way that the postselection probability is equal
to all witness states, then the class
is equal to PP.
Finally, the Merlinization does not change the class
SBQP.
\end{abstract}
\maketitle

\section{Introduction}
QMA (Quantum Merlin-Arthur) is a quantum version of NP (more precisely,
MA) first studied by 
Knill~\cite{Knill},
Kitaev~\cite{Kitaev}, 
and
Watrous~\cite{Watrous}. 

\begin{definition}
A language $L$ is in {\rm QMA} 
if and only if there exist polynomials
$w$, $m$, and a uniform family $\{Q_x\}_x$ 
of polynomial-size quantum circuits,
where $x$ is an instance with $|x|=n$, 
Arthur's circuit
$Q_x$ takes as input a $w(n)$-qubit quantum state (so called the witness)
sent from Merlin, 
and $m(n)$ ancilla qubits initialized in $|0\rangle$,
such that
\begin{itemize}
\item
if $x\in L$, then there exists a $w(n)$-qubit 
quantum state $\psi$ such that 
\begin{eqnarray*}
P_{Q_x(\psi)}(o=1)\ge a,
\end{eqnarray*}
\item
if $x\notin L$, then for any $w(n)$-qubit quantum state $\xi$,
\begin{eqnarray*}
P_{Q_x(\xi)}(o=1)\le b.
\end{eqnarray*}
\end{itemize}
Here, 
\begin{eqnarray*}
P_{Q_x(\xi)}(o=1)\equiv{\rm Tr}\Big[
(|1\rangle\langle1|\otimes I^{\otimes w(n)+m(n)-1})
Q_x(\xi\otimes |0\rangle\langle0|^{\otimes m(n)})Q_x^\dagger
\Big]
\end{eqnarray*}
is the probability that the circuit $Q_x$
on input $\xi\otimes |0\rangle\langle 0|^{\otimes m(n)}$
outputs $o=1$,
and 
$a-b\ge 1/poly(n)$.
Note that, without loss of generality, we can assume that the
yes witness $\psi$ is a pure state.
\end{definition}

QMA has a variant, which is called QCMA~\cite{AN}, where
the witness quantum state is replaced with a poly-length
classical bit string.
Whether ${\rm QMA}\neq{\rm QCMA}$ (i.e., quantum witnesses
are more powerful than classical ones) is one of long-standing
open problems.

\begin{definition}
The class {\rm QCMA} is defined similarly to 
{\rm QMA} except that the witness is
not a $w(n)$-qubit quantum state but a classical $w(n)$-bit string
(or, equivalently, a $w(n)$-qubit state in the computational basis).
\end{definition}

Due to the witness states given by the powerful Merlin,
QCMA and QMA are stronger than BQP. For example, it
is known that QMA can solve the group non-membership problem~\cite{Watrous},
but it is not known how the witness state for the group
non-membership problem can be generated
in quantum polynomial time.

Studying complexity classes above BQP has recently been attracting much
attentions because of several reasons. First,
studying these classes can give insights to understanding
why quantum theory has such a mathematical structure.
In particular, the existence of the so-called 
Popescu-Rohrlich box~\cite{PRbox} suggests that
quantum physics is not uniquely derived only from
the no-signaling principle. Physicists have therefore been interested
in reasons why quantum theory is as it is.
Several ``super quantum" computing models have been demonstrated
to have much stronger power than the standard polynomial-time
quantum computing~\cite{AbramsLloyd,Aaronson,AaronsonBoulandFitzsimons,
MN,Ciaran1,Ciaran2}. 
These results explain why quantum theory
should have the current form.
Second, studying complexity classes above BQP is related to
studying quantum supremacy of sub-universal quantum computing models.
It has been shown that several sub-universal quantum computing
models, such as IQP~\cite{IQP1,IQP2}, non-interacting bosons~\cite{AA},
and the DQC1 model~\cite{MFF,FKMNTT}, cannot be classically efficiently simulated
unless the polynomial hierarchy collapses.
To show the no-go results, some complexity classes 
above BQP, such as ${\rm postBQP}={\rm PP}$ and SBQP, are used.

Do the witness states by Merlin also
enhance such above-BQP classes?
In this paper, we study
how complexity classes above BQP
change if we ``Merlinize" them like QMA and QCMA.
In other words, we  
allow an extra input quantum state (or classical bit string) 
given by Merlin as witness.
More precisely, we define new classes,
QMA$_{\rm postBQP}$,
QCMA$_{\rm postBQP}$,
QMA$_{\rm postBQP}^*$,
QCMA$_{\rm postBQP}^*$,
QMA$_{\rm postBQP_{\rm FP}}$,
QCMA$_{\rm postBQP_{\rm FP}}$,
QMA$_{\rm SBQP}$,
and
QCMA$_{\rm SBQP}$,
as Merlinized versions of postBQP, postBQP$_{\rm FP}$,
and SBQP, respectively.
Definitions of postBQP, postBQP$_{\rm FP}$,
and SBQP are given in Sec.~\ref{known_classes}.
Our new eight classes are defined in Sec.~\ref{new_classes}.
Our results are summarized in Fig.~\ref{fig}.

In particular, our main results are the following three:
First, the quantum-witness Merlinized version of postBQP, 
${\rm QMA}_{\rm postBQP}$, is equal to 
PSPACE.
As we can easily see that the classical-witness Merlinized version
of postBQP, ${\rm QCMA}_{\rm postBQP}$ is equal to ${\rm NP}^{\rm PP}$,
this result implies that quantum witnesses are more powerful
than classical ones when Arthur has the strong power of 
postselections (unless ${\rm PSPACE}={\rm NP}^{\rm PP}$).
Second, if the Merlinized postBQP is restricted
in such a way that the postselection probability is equal
to all witness states, such classes, ${\rm Q(C)MA}_{\rm postBQP}^*$,
are equal to PP.
Finally, the Merlinization does not change the class
SBQP: ${\rm Q(C)MA}_{\rm SBQP}={\rm SBQP}$.
The second and third results mean that if Arthur has 
super-quantum powers that are weaker than postBQP, the computational
power of quantum witnesses is equivalent to that of classical ones.

\begin{figure}[htbp]
\begin{center}
\includegraphics[width=0.4\textwidth]{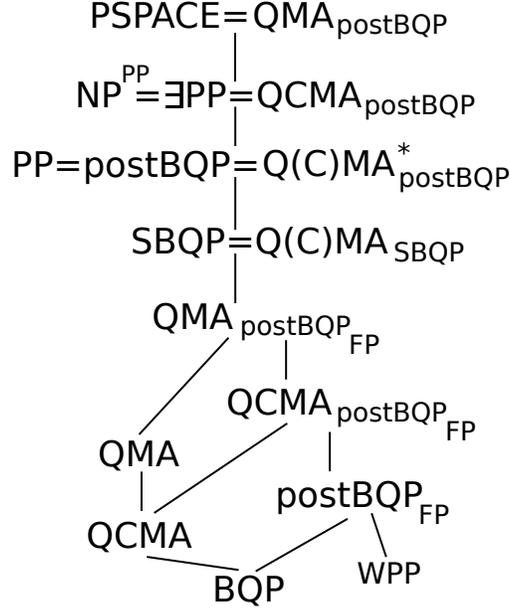}
\end{center}
\caption{Summary of results.}
\label{fig}
\end{figure}

\section{Known classes}
\label{known_classes}
In this section, we review definitions of known complexity classes.

First, 
the class postBQP was defined by Aaronson~\cite{Aaronson}, and it was shown
to be equal to PP.

\begin{definition}
A language $L$ is in {\rm postBQP} if and only if
there exist a polynomial $s$ and a uniform family
$\{Q_x\}_x$ of polynomial-size
quantum circuits, where $x$ is an instance with $|x|=n$,
such that
\begin{eqnarray*}
P_{Q_x}(p=1)\ge\frac{1}{2^{s(n)}}
\end{eqnarray*}
and
\begin{itemize}
\item
if $x\in L$, then 
$P_{Q_x}(o=1|p=1)\ge\frac{2}{3}$,
\item
if $x\notin L$, then 
$P_{Q_x}(o=1|p=1)\le\frac{1}{3}$.
\end{itemize}
Note that the error bound $(2/3,1/3)$
can be amplified to $(1-2^{-r(n)},2^{-r(n)})$ for any polynomial $r$
by using the standard amplification technique.
\end{definition}

Second, a variant of postBQP, which is called
${\rm postBQP}_{\rm FP}$, was defined in 
Ref.~\cite{MN}, and shown to be in AWPP.

\begin{definition}
A language $L$ is in ${\rm postBQP}_{\rm FP}$ if and only if
there exist a polynomial $s$, an FP function (i.e., polynomial-time
computable function) $f$, 
and a uniform family
$\{Q_x\}_x$ of polynomial-size
quantum circuits, where $x$ is an instance with $|x|=n$,
such that
\begin{eqnarray*}
P_{Q_x}(p=1)=\frac{f(x)}{2^{s(n)}}
\end{eqnarray*}
and
\begin{itemize}
\item
if $x\in L$, then 
$P_{Q_x}(o=1|p=1)\ge\frac{2}{3}$,
\item
if $x\notin L$, then 
$P_{Q_x}(o=1|p=1)\le\frac{1}{3}$.
\end{itemize}
Note that the error bound $(2/3,1/3)$
can be amplified to $(1-2^{-r(n)},2^{-r(n)})$ for any polynomial $r$
by using the standard amplification technique.
Furthermore, it was shown in Ref.~\cite{MN} that $f$ can be actually taken to
be 1 without changing the power of the class.
\end{definition}

Finally, the class SBQP was defined by Kuperberg~\cite{Kuperberg}. It is
a quantum version of SBP~\cite{BGM}, 
and equal to the classical class A$_0$PP~\cite{Vya}.
\begin{definition}
A language $L$ is in {\rm SBQP} if and only if there exist
a polynomial $r$ and a uniform family $\{Q_x\}_x$ 
of polynomial-size quantum circuits, where $x$ is an instance
with $|x|=n$, 
such that
\begin{itemize}
\item
If $x\in L$, then $Q_x$ accepts with probability at least $2^{-r(n)}$.
\item
If $x\notin L$, then $Q_x$ accepts with probability at most $2^{-r(n)-1}$.
\end{itemize}
Note that the error bound $(2^{-r(n)},2^{-r(n)-1})$
can be replaced with $(a2^{-r(n)},b2^{-r(n)})$ for any 
$0\le b< a\le1$ such that $a-b\ge1/poly(n)$.
\end{definition}

\section{New classes}
\label{new_classes}

In this section, we define new classes that we study.

\begin{definition}
A language $L$ is in 
${\rm QMA}_{\rm postBQP}$ if and only if there exist polynomials
$w$, $m$, and $s$, and a uniform family $\{Q_x\}_x$ 
of polynomial-size
quantum circuits,
where $x$ is an instance with $|x|=n$, 
$Q_x$ takes as input a $w(n)$-qubit quantum state (so called the witness), 
and $m(n)$ ancilla qubits initialized in $|0\rangle$,
such that
\begin{eqnarray*}
P_{Q_x(\xi)}(p=1)\ge\frac{1}{2^{s(n)}}
\end{eqnarray*}
for any $w(n)$-qubit state $\xi$,
and
\begin{itemize}
\item 
if $x\in L$, 
then there exists a $w(n)$-qubit quantum
state $\psi$ such that 
\begin{eqnarray*}
P_{Q_x(\psi)}(o=1|p=1)\ge\frac{2}{3}.
\end{eqnarray*}
\item
if $x\notin L$, then for any $w(n)$-qubit quantum state 
$\xi$,
\begin{eqnarray*}
P_{Q_x(\xi)}(o=1|p=1)\le\frac{1}{3}.
\end{eqnarray*}
\end{itemize}
Note that if we are allowed to increase the
witness length $w$, we can amplify the error bound
$(2/3,1/3)$ to $(1-2^{-r(n)},2^{-r(n)})$ for any polynomial $r$ by using 
the standard amplification technique.
(It is open whether the Marriott-Watrous type 
amplification~\cite{MW}
is possible for this class.)
\end{definition}

Like QMA, the yes witness state can be restricted to
be pure:

\begin{lemma}
In the definition of ${\rm QMA}_{\rm postBQP}$, the yes witness $\psi$
can be a pure state without changing the power of the class.
\end{lemma}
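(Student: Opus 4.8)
The plan is to mimic the standard purification argument for QMA, but to be careful about the postselection constraint, which is the only subtle point. Suppose $x \in L$ and let $\rho$ be an optimal mixed witness, so $P_{Q_x(\rho)}(o=1\,|\,p=1) \ge 2/3$ while $P_{Q_x(\xi)}(p=1) \ge 2^{-s(n)}$ for every $\xi$. Write a spectral decomposition $\rho = \sum_i \lambda_i |\psi_i\rangle\langle\psi_i|$ with each $|\psi_i\rangle$ a $w(n)$-qubit pure state. By linearity of the trace, both $P_{Q_x(\cdot)}(p=1)$ and the joint acceptance-and-postselection probability $P_{Q_x(\cdot)}(o=1 \wedge p=1)$ are affine in the input state, so

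\begin{eqnarray*}
P_{Q_x(\rho)}(p=1) = \sum_i \lambda_i\, P_{Q_x(\psi_i)}(p=1),
\qquad
P_{Q_x(\rho)}(o=1\wedge p=1) = \sum_i \lambda_i\, P_{Q_x(\psi_i)}(o=1\wedge p=1).
\end{eqnarray*}

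Next I would argue by averaging. Since $P_{Q_x(\rho)}(o=1\,|\,p=1) = \frac{\sum_i \lambda_i P_{Q_x(\psi_i)}(o=1\wedge p=1)}{\sum_i \lambda_i P_{Q_x(\psi_i)}(p=1)} \ge \frac23$, we have $\sum_i \lambda_i\big[P_{Q_x(\psi_i)}(o=1\wedge p=1) - \tfrac23 P_{Q_x(\psi_i)}(p=1)\big] \ge 0$, so at least one index $i^*$ satisfies $P_{Q_x(\psi_{i^*})}(o=1\wedge p=1) \ge \tfrac23 P_{Q_x(\psi_{i^*})}(p=1)$, i.e. $P_{Q_x(\psi_{i^*})}(o=1\,|\,p=1) \ge \tfrac23$. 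Crucially, the postselection lower bound $P_{Q_x(\psi_{i^*})}(p=1) \ge 2^{-s(n)}$ holds automatically because it is required for \emph{every} $w(n)$-qubit state, pure ones included; this is exactly the place where the particular form of the definition (the postselection bound quantified over all $\xi$, not just the witness) does the work for us. Taking $\psi = |\psi_{i^*}\rangle$ gives a pure yes-witness, and the soundness condition is unchanged since it already quantifies over all $\xi$. Hence replacing "there exists $\psi$" by "there exists a pure $\psi$" does not change the class.

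I expect the main obstacle to be purely expository: one must double-check that the quantity $P_{Q_x(\xi)}(o=1\wedge p=1)$, not the conditional probability $P_{Q_x(\xi)}(o=1\,|\,p=1)$, is what is affine in $\xi$ (the conditional probability is a ratio of affine functionals and is not itself affine), and that the averaging step is therefore applied to the numerator and denominator separately rather than to the ratio. Once that is set up correctly, the argument is a one-line convexity/pigeonhole step, and the postselection constraint is handled for free by the "for any $\xi$" clause in the definition. No amplification or Marriott--Watrous machinery is needed here.
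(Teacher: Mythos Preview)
Your proposal is correct and follows essentially the same route as the paper: spectrally decompose the mixed witness, use affineness of the joint probabilities $P(o=1,p=1)$ and $P(p=1)$ in the input state, and conclude by an averaging argument that some eigenvector already achieves conditional acceptance at least $2/3$. The paper phrases the averaging step as a proof by contradiction (assuming every eigenvector has conditional acceptance strictly below $2/3$ and deriving $P_{Q_x(\psi)}(o=1\mid p=1)<2/3$), whereas you clear the denominator and pigeonhole directly; these are the same argument. Your explicit remark that the postselection lower bound for the chosen pure eigenvector comes for free from the ``for all $\xi$'' clause is a point the paper leaves implicit.
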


\begin{proof}
Let us assume that
\begin{eqnarray*}
P_{Q_x(\psi)}(o=1|p=1)\ge\frac{2}{3}
\end{eqnarray*}
for a state $\psi$. Let us diagonalize $\psi$ as 
$\psi=\sum_i\alpha_i|\psi_i\rangle\langle \psi_i|$ with
eigenvalues $\{\alpha_i\}_i$ and eigenvectors $\{|\psi_i\rangle\}_i$.
Let us assume that
\begin{eqnarray*}
P_{Q_x(|\psi_i\rangle)}(o=1|p=1)<\frac{2}{3}
\end{eqnarray*}
for all $i$. Then,
\begin{eqnarray*}
P_{Q_x(\psi)}(o=1|p=1)&=&
\frac{P_{Q_x(\psi)}(o=1,p=1)}{P_{Q_x(\psi)}(p=1)}\\
&=&\sum_i\alpha_i\frac{P_{Q_x(|\psi_i\rangle)}(o=1,p=1)}{P_{Q_x(\psi)}(p=1)}\\
&=&\sum_i\alpha_i\frac{P_{Q_x(|\psi_i\rangle)}(o=1,p=1)}
{P_{Q_x(|\psi_i\rangle)}(p=1)}
\frac{P_{Q_x(|\psi_i\rangle)}(p=1)}{P_{Q_x(\psi)}(p=1)}\\
&<&\frac{2}{3}\sum_i\alpha_i
\frac{P_{Q_x(|\psi_i\rangle)}(p=1)}{P_{Q_x(\psi)}(p=1)}\\
&=&\frac{2}{3}
\frac{P_{Q_x(\psi)}(p=1)}{P_{Q_x(\psi)}(p=1)}\\
&=&\frac{2}{3},
\end{eqnarray*}
which contradicts to the assumption. Therefore,
\begin{eqnarray*}
P_{Q_x(|\psi_i\rangle)}(o=1|p=1)\ge\frac{2}{3}
\end{eqnarray*}
for at least one pure state $|\psi_i\rangle$.
\end{proof}

\begin{definition}
The class ${\rm QCMA}_{\rm postBQP}$ is defined similarly to
${\rm QMA}_{\rm postBQP}$ except that the witness is not
a $w(n)$-qubit state but a classical $w(n)$-bit string (or,
equivalently, a $w(n)$-qubit state in the computational basis).
\end{definition}

\begin{definition}
A language $L$ is in ${\rm QMA}_{\rm postBQP}^*$
if and only if it is in ${\rm QMA}_{\rm postBQP}$
and 
\begin{eqnarray*}
P_{Q_x(\xi)}(p=1)=
P_{Q_x(\rho)}(p=1)
\end{eqnarray*}
for any $w(n)$-qubit states $\xi$ and $\rho$.
\end{definition}

\if0
\begin{definition}
A language $L$ is in 
${\rm QMA}_{\rm postBQP}^*$ if and only if there exist polynomials
$w$, $m$, and $s$, and a uniform family $\{Q_x\}_x$ 
of polynomial-size quantum circuits,
where $x$ is an instance with $|x|=n$, 
$Q_x$ takes as input a $w(n)$-qubit quantum state (so called the witness), 
and $m(n)$ ancilla qubits initialized in $|0\rangle$,
such that
\begin{eqnarray*}
P_{Q_x(\xi)}(p=1)=
P_{Q_x(\rho)}(p=1)\ge\frac{1}{2^{s(n)}}
\end{eqnarray*}
for any $w(n)$-qubit states $\xi$ and $\rho$,
and
\begin{itemize}
\item
if $x\in L$, 
then there exists a $w(n)$-qubit quantum
state $\psi$ such that 
\begin{eqnarray*}
P_{Q_x(\psi)}(o=1|p=1)\ge\frac{2}{3}.
\end{eqnarray*}
\item
if $x\notin L$, then for any $w(n)$-qubit quantum state 
$\xi$,
\begin{eqnarray*}
P_{Q_x(\xi)}(o=1|p=1)\le\frac{1}{3}.
\end{eqnarray*}
\end{itemize}
Note that, without loss of generality, we can assume that the yes
witness $\psi$ is a pure state.
Furthermore, note that if we are allowed to increase the
witness length $w$, we can amplify the error bound
$(2/3,1/3)$ to $(1-2^{-r(n)},2^{-r(n)})$ for any polynomial $r$ by using 
the standard amplification technique.
(It is an open problem whether the Marriott-Watrous type
amplification is possible.)
\end{definition}
\fi

\begin{definition}
The class ${\rm QCMA}_{\rm postBQP}^*$ is defined similarly to
${\rm QMA}_{\rm postBQP}^*$ except that the witness is not
a $w(n)$-qubit state but a classical $w(n)$-bit string (or,
equivalently, a $w(n)$-qubit state in the computational basis).
\end{definition}

\begin{definition}
A language $L$ is in 
${\rm QMA}_{\rm SBQP}$ if and only if there exist polynomials
$w$, $m$, and $r$, and a uniform family $\{Q_x\}_x$ 
of polynomial-size quantum circuits,
where $x$ is an instance with $|x|=n$, 
$Q_x$ takes as input a $w(n)$-qubit quantum state (so called the witness), 
and $m(n)$ ancilla qubits initialized in $|0\rangle$,
such that
\begin{itemize}
\item
if $x\in L$, 
then there exists a $w(n)$-qubit quantum
state $\psi$ such that 
\begin{eqnarray*}
P_{Q_x(\psi)}(o=1)\ge2^{-r(n)}.
\end{eqnarray*}
\item
if $x\notin L$, then for any $w(n)$-qubit quantum state $\xi$,
\begin{eqnarray*}
P_{Q_x(\xi)}(o=1)\le2^{-r(n)-1}.
\end{eqnarray*}
\end{itemize}
Note that, without loss of generality, we can assume that the yes witness
$\psi$ is a pure state.
Furthermore, note that the error bound $(2^{-r(n)},2^{-r(n)-1})$
can be amplified to $(2^{-r(n)k},2^{-r(n)k-k})$
for any integer $k\ge1$ without changing the witness size $w$
by using a similar technique of Ref.~\cite{MW}.
(In Ref.~\cite{MW}, we accept if $\sum_{i=1}^Nz_i\ge N\frac{a+b}{2}$,
but now we accept if all $z_i=1$.)
\end{definition}

\begin{definition}
The class ${\rm QCMA}_{\rm SBQP}$ is defined similarly to
${\rm QMA}_{\rm SBQP}$ except that the witness is not
a $w(n)$-qubit state but a classical $w(n)$-bit string (or,
equivalently, a $w(n)$-qubit state in the computational basis).
\end{definition}

\begin{definition}
A language $L$ is in 
${\rm QMA}_{\rm postBQP_{\rm FP}}$ if and only if 
it is in
${\rm QMA}_{\rm postBQP}$ and
\begin{eqnarray*}
P_{Q_x(\xi)}(p=1)=\frac{1}{2^{s(n)}}
\end{eqnarray*}
for any $w(n)$-qubit state $\xi$.
Here, $s$ is the polynomial determined from the definition
of ${\rm QMA}_{\rm postBQP}$.
\end{definition}

\if0
\begin{definition}
A language $L$ is in 
${\rm QMA}_{\rm postBQP_{\rm FP}}$ if and only if there exist polynomials
$w$, $m$, and $s$, and a uniform family $\{Q_x\}_x$ 
of polynomial-size quantum circuits,
where $x$ is an instance with $|x|=n$, 
$Q_x$ takes as input a $w(n)$-qubit quantum state (so called the witness), 
and $m(n)$ ancilla qubits initialized in $|0\rangle$,
such that
\begin{eqnarray*}
P_{Q_x(\xi)}(p=1)=\frac{1}{2^{s(n)}}
\end{eqnarray*}
for any $w(n)$-qubit state $\xi$,
and
\begin{itemize}
\item
if $x\in A_{yes}$, 
then there exists a $w(n)$-qubit quantum
state $\psi$ such that 
\begin{eqnarray*}
P_{Q_x(\psi)}(o=1|p=1)\ge\frac{2}{3}.
\end{eqnarray*}
\item
if $x\in A_{no}$, then for any $w(n)$-qubit quantum state 
$\xi$,
\begin{eqnarray*}
P_{Q_x(\xi)}(o=1|p=1)\le\frac{1}{3}.
\end{eqnarray*}
\end{itemize}
Note that, without loss of generality, we can assume that
the yes witness $\psi$ can be a pure state.
Furthermore, note that if we are allowed to increase the
witness length $w$, we can amplify the error bound
$(2/3,1/3)$ to $(1-2^{-r(n)},2^{-r(n)})$ for any polynomial $r$ by using 
the standard amplification technique.
(It is an open problem whether the Marriott-Watrous type
amplification is possible.)
\end{definition}
\fi

It is obvious that 
${\rm QMA}\subseteq{\rm QMA}_{{\rm postBQP}_{\rm FP}}$.
Showing the equality,
${\rm QMA}={\rm QMA}_{{\rm postBQP}_{\rm FP}}$,
seems to be difficult, since 
${\rm WPP}\subseteq{{\rm postBQP}_{\rm FP}}$~\cite{MN},
and therefore the equality leads to ${\rm WPP}\subseteq{\rm QMA}$.
The class ${\rm WPP}$ contains ${\rm SPP}$,
and ${\rm SPP}$ contains the graph (non)isomorphism.
It is an open problem whether the graph non-isomorphism 
is in QMA~\cite{Watrous}.

\begin{definition}
The class ${\rm QCMA}_{{\rm postBQP}_{\rm FP}}$ is defined similarly to
${\rm QMA}_{{\rm postBQP}_{\rm FP}}$ except that the witness is not
a $w(n)$-qubit state but a classical $w(n)$-bit string (or,
equivalently, a $w(n)$-qubit state in the computational basis).
\end{definition}

Obviously,
${\rm postBQP}_{\rm FP}\subseteq{\rm QCMA}_{{\rm postBQP}_{\rm FP}}$.
The equality,
${\rm postBQP}_{\rm FP}={\rm QCMA}_{{\rm postBQP}_{\rm FP}}$, seems to
be unlikely, since it leads to
\begin{eqnarray*}
{\rm NP}\subseteq{\rm QCMA}\subseteq{\rm QCMA}_{{\rm postBQP}_{\rm FP}}
={\rm postBQP}_{\rm FP}\subseteq{\rm AWPP},
\end{eqnarray*}
but it is known that there exists an oracle 
$A$ such that ${\rm P}^A={\rm AWPP}^A$ and the polynomial hierarchy
is infinite~\cite{FFKL}.

Furthermore, it is obvious that 
${\rm QCMA}\subseteq{\rm QCMA}_{{\rm postBQP}_{\rm FP}}$.
Again, showing the equality,
${\rm QCMA}={\rm QCMA}_{{\rm postBQP}_{\rm FP}}$,
seems to be difficult, since it leads to
${\rm WPP}\subseteq{\rm QCMA}$.

\section{Results}
In this section, we give the results of this paper.
We show several relations between our new complexity classes
and known complexity classes.

We first study 
${\rm QMA}_{\rm postBQP}$ and
${\rm QCMA}_{\rm postBQP}$.

\begin{theorem}
%${\rm QMA}_{\rm postBQP}\subseteq{\rm EXPSPACE}$.
${\rm QMA}_{\rm postBQP}\subseteq 
{\rm QMA}(\frac{1}{2}+2^{-r},\frac{1}{2}-2^{-r})$
for a polynomial $r$.
\end{theorem}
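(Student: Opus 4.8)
The plan is to simulate the postselecting verifier by an ordinary {\rm QMA} verifier using the standard ``flip a fair coin whenever postselection fails'' trick, which produces an acceptance probability that is exponentially close to $1/2$ but lies on the correct side of $1/2$ by an inverse-exponential margin. Let $L\in{\rm QMA}_{\rm postBQP}$ be witnessed by a uniform family $\{Q_x\}_x$ and polynomials $w,m,s$, so that $P_{Q_x(\xi)}(p=1)\ge 2^{-s(n)}$ for \emph{every} $w(n)$-qubit state $\xi$, while $P_{Q_x(\xi)}(o=1\mid p=1)$ is $\ge 2/3$ for some $\psi$ when $x\in L$ and $\le 1/3$ for all $\xi$ when $x\notin L$.

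First I would define the {\rm QMA} verifier $V_x$: on witness $\psi$ and its ancillas, run $Q_x(\psi)$ to obtain the postselection bit $p$ and output bit $o$; prepare one fresh qubit in $H|0\rangle$ and measure it in the computational basis to get a uniform bit $c$; then accept iff ($p=1$ and $o=1$) or ($p=0$ and $c=1$). The final acceptance predicate is just a multiplexer ``output $o$ if $p=1$, else output $c$,'' implementable with a couple of Toffoli gates, so $\{V_x\}_x$ is still a uniform family of polynomial-size circuits taking a $w(n)$-qubit witness. The second step is the acceptance-probability computation: for any witness $\xi$, abbreviating $\pi:=P_{Q_x(\xi)}(p=1)\in[\,2^{-s(n)},1\,]$ and $q:=P_{Q_x(\xi)}(o=1\mid p=1)$, we get
\[
P_{V_x(\xi)}(\text{acc})=\pi q+(1-\pi)\tfrac12=\tfrac12+\pi\Bigl(q-\tfrac12\Bigr).
\]

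The third step is reading off completeness and soundness from this identity. If $x\in L$, take the good witness $\psi$: then $q-\tfrac12\ge\tfrac16>0$, and since $\pi\ge 2^{-s(n)}$ we get $P_{V_x(\psi)}(\text{acc})\ge\tfrac12+2^{-s(n)}/6$. If $x\notin L$, then for every $\xi$ we have $q-\tfrac12\le-\tfrac16<0$; because this coefficient is negative and $\pi\ge 2^{-s(n)}$, the worst case for the verifier is $\pi=2^{-s(n)}$, giving $P_{V_x(\xi)}(\text{acc})\le\tfrac12-2^{-s(n)}/6$. Choosing $r:=s+3$ so that $2^{-r(n)}\le 2^{-s(n)}/6$ then yields $L\in{\rm QMA}(\tfrac12+2^{-r},\tfrac12-2^{-r})$.

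I do not expect a genuine obstacle here; the only point that needs care is that the gap is centered exactly at $1/2$, which is forced precisely because the postselection-success probability enjoys the \emph{same} lower bound $2^{-s(n)}$ for all witnesses, including the adversarial ones in the $x\notin L$ case — this is what keeps the negative term from growing past $2^{-s(n)}/6$. (One may alternatively first amplify the conditional gap to $(1-2^{-t},2^{-t})$ as permitted by the definition and repeat the same computation; then one should check that the postselection lower bound survives, which it does: the single-copy postselection POVM element $M$ on the witness register satisfies $M\ge 2^{-s}I$, hence $M^{\otimes k}\ge 2^{-sk}I$ even against entangled witnesses. This refinement is not needed for the stated bound, so I would keep the direct argument above.)
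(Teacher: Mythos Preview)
Your proof is correct and follows essentially the same approach as the paper: both replace the postselecting verifier by one that outputs the original decision bit when $p=1$ and a fair coin when $p=0$, yielding acceptance probability $\tfrac12+\pi\bigl(q-\tfrac12\bigr)$ with $\pi=P_{Q_x(\xi)}(p=1)$ and $q=P_{Q_x(\xi)}(o=1\mid p=1)$. The paper arrives at the same identity via a few more lines of algebra and states the gap as $\tfrac14\cdot 2^{-s}$ rather than your $\tfrac16\cdot 2^{-s}$, but the construction and the logic are the same.
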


\begin{proof}
%It was shown in Ref.~\cite{IKW} that 
%\begin{eqnarray*}
%{\rm QIP}(poly,a,<a)\subseteq{\rm EXPSPACE},
%\end{eqnarray*}
%where $a:{\mathbf N}\to(0,1]$ is any polynomial-time computable function,
%and ${\rm QIP}(poly,a,<a)$ is the class of promise problems
%$A=(A_{yes},A_{no})$ having a quantum interactive proof system
%with $poly$ messages that accepts
%with probability at least $a(|x|)$ on input $x\in A_{yes}$ and
%with probability strictly smaller than $a(|x|)$ on input $x\in A_{no}$.

%We show 
%\begin{eqnarray*}
%{\rm QMA}_{\rm postBQP}	\subseteq 
%{\rm QIP}\Big(1,\frac{1}{2}+2^{-r},\frac{1}{2}-2^{-r}\Big)
%\end{eqnarray*}
%for a polynomial $r$.
Let us assume that a language $L$ is in ${\rm QMA}_{\rm postBQP}$,
and let $Q_x$ be Arthur's circuit.
Let us consider the following circuit $R_x$:
\begin{itemize}
\item[1.]
It simulates $Q_x$ on input (witness) $\xi$.
\item[2.]
If $Q_x$ outputs $o=0$ and $p=0$, then 
$R_x$ outputs $o=1$ with probability 1/2 and $o=0$ with probability 1/2.
\item[3.]
If $Q_x$ outputs $o=0$ and $p=1$, $R_x$ outputs $o=0$.
\item[4.]
If $Q_x$ outputs $o=1$ and $p=0$, $R_x$ outputs $o=1$ with probability 1/2
and $o=0$ with probability 1/2.
\item[5.]
If $Q_x$ outputs $o=1$ and $p=1$, $R_x$ outputs $o=1$.
\end{itemize}
Then,
\begin{eqnarray*}
P_{R_x(\xi)}(o=1)=
P_{Q_x(\xi)}(o=1,p=1)
+\frac{1}{2}
P_{Q_x(\xi)}(p=0).
\end{eqnarray*}

If $x\in L$, then by the assumption of $L\in {\rm QMA}_{\rm postBQP}$,
there exists a $w(n)$-qubit state $\psi$ such that
\begin{eqnarray*}
P_{Q_x(\psi)}(o=1|p=1)-P_{Q_x(\psi)}(o=0|p=1)\ge\frac{1}{2}.
\end{eqnarray*}
If we multiply both sides by $P_{Q_x(\psi)}(p=1)$, 
we obtain
\begin{eqnarray*}
P_{Q_x(\psi)}(o=1,p=1)- P_{Q_x(\psi)}(o=0,p=1)\ge
\frac{1}{2}P_{Q_x(\psi)}(p=1).
\end{eqnarray*}
By the assumption, $P_{Q_x(\psi)}(p=1)\ge 2^{-s}$
for some polynomial $s$.
Therefore,
\begin{eqnarray*}
P_{Q_x(\psi)}(o=1,p=1)- P_{Q_x(\psi)}(o=0,p=1)\ge
\frac{1}{2}2^{-s}.
\end{eqnarray*}
Hence
\begin{eqnarray*}
P_{Q_x(\psi)}(o=1,p=1)-(1-P_{Q_x(\psi)}(o=1,p=1)-P_{Q_x(\psi)}(p=0))
\ge\frac{1}{2}2^{-s},
\end{eqnarray*}
which means
\begin{eqnarray*}
2P_{Q_x(\psi)}(o=1,p=1)+P_{Q_x(\psi)}(p=0)\ge1+\frac{1}{2}2^{-s}.
\end{eqnarray*}
Therefore, we obtain
\begin{eqnarray*}
P_{R_x(\psi)}(o=1)\ge \frac{1}{2}+\frac{1}{4}2^{-s}.
\end{eqnarray*}

If $x\notin L$, on the other hand,
for any $w(n)$-qubit state $\xi$ 
\begin{eqnarray*}
P_{Q_x(\xi)}(o=0|p=1)- P_{Q_x(\xi)}(o=1|p=1)\ge\frac{1}{2}.
\end{eqnarray*}
In a similar way, this means
\begin{eqnarray*}
P_{R_x(\xi)}(o=1)\le \frac{1}{2}-\frac{1}{4}2^{-s}.
\end{eqnarray*}
Therefore, $L$ is in ${\rm QMA}(\frac{1}{2}+2^{-s-2},\frac{1}{2}-2^{-s-2})$.
\end{proof}

According to Refs.~\cite{FL1,FL2,FKLMN},
${\rm QMA}(\frac{1}{2}+2^{-r},\frac{1}{2}-2^{-r})\subseteq
{\rm PSPACE}
$
for any polynomial $r$.
Therefore, the above theorem means
\begin{eqnarray*}
{\rm QMA}_{\rm postBQP}\subseteq{\rm PSPACE}.
\end{eqnarray*}

\begin{theorem}
\label{preciseQMA in postQMA}
$\mathrm{QMA}(\frac{1}{2}+2^{-r},\frac{1}{2}-2^{-r})\subseteq
\mathrm{QMA}_{\rm postBQP}$
for any polynomial $r$.
\end{theorem}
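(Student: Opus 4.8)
The plan is to run the given precise-${\rm QMA}$ verifier inside the ${\rm QMA}_{\rm postBQP}$ machine and then exploit the postselection register to blow the exponentially small acceptance gap up to a constant, in the spirit of Aaronson's postselection amplification from the proof that ${\rm postBQP}={\rm PP}$~\cite{Aaronson}. First a standard preprocessing step: given $L\in{\rm QMA}(\frac12+2^{-r},\frac12-2^{-r})$ with verifier $V_x$ and acceptance operator $M_x$ (the positive semidefinite operator on the witness register with $P_{V_x(\psi)}(o=1)=\langle\psi|M_x|\psi\rangle$, so that $\lambda_{\max}(M_x)\ge\frac12+2^{-r}$ if $x\in L$ and $\le\frac12-2^{-r}$ if $x\notin L$), I would replace $V_x$ by a verifier $V_x'$ that with probability $2^{-q(n)}$ accepts outright and otherwise runs $V_x$. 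Its acceptance operator is $M_x'=(1-2^{-q})M_x+2^{-q}I\succeq 2^{-q}I$, so that \emph{every} run of $V_x'$ on \emph{every} input state accepts with probability at least $2^{-q}$, while still $\lambda_{\max}(M_x')\ge\frac12+2^{-r'}$ (resp. $\le\frac12-2^{-r'}$) for a slightly larger polynomial $r'$.

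Next, the ${\rm QMA}_{\rm postBQP}$ circuit $Q_x$ asks Merlin for a witness on $k$ registers, $k$ polynomial (the honest witness being $\psi^{\otimes k}$), runs $V_x'$ on each register keeping the outcomes coherent, and then performs Aaronson's amplification on these $k$ output qubits together with polynomially many fresh ancillas: a sequence of controlled rotations and postselections, all of whose postselection bits are gathered into the single register $p$, ending with an output bit $o$ whose conditional probability $P_{Q_x(\xi)}(o=1\,|\,p=1)$ is driven from $\frac12\pm2^{-r'}$ to a constant and then to $(\frac23,\frac13)$. Because every individual postselection in the procedure succeeds with probability at least $2^{-{\rm poly}(n)}$ — here one uses $M_x'\succeq 2^{-q}I$, so intermediate ``this run accepted'' events are bounded below — the total postselection probability $P_{Q_x(\xi)}(p=1)$ is at least $2^{-s(n)}$ for every $\xi$, as the definition demands.

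For completeness, when $x\in L$ the honest witness $\psi^{\otimes k}$, with $\langle\psi|M_x'|\psi\rangle\ge\frac12+2^{-r'}$, makes the $k$ runs of $V_x'$ i.i.d., and Aaronson's amplification then yields conditional acceptance $\ge\frac23$. For soundness one would use that, for \emph{any} (possibly entangled) witness $\xi$ on the $k$ registers and any set $T$ of runs, the probability that all the runs in $T$ accept equals ${\rm Tr}\big[\xi\,(M_x')^{\otimes|T|}\otimes I\big]\le\lambda_{\max}(M_x')^{|T|}\le(\tfrac12-2^{-r'})^{|T|}$; feeding this operator bound through the (monotone) amplification should force the conditional acceptance down to $\le\frac13$ for every witness.

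The hard part, I expect, is exactly this soundness argument. Aaronson's amplification is tailored to a black-box circuit whose acceptance probability is a \emph{fixed} number $q$, whereas here the $k$ runs share a Merlin-chosen, possibly entangled witness, so there is no single $q$ and the runs need not be independent. One must therefore verify that every postselection used in the amplification can be arranged to be of the form ``all runs in some prescribed set accepted,'' so that the inequality $(M_x')^{\otimes|T|}\preceq\lambda_{\max}(M_x')^{|T|}\,I$ propagates through and delivers the $\le\frac13$ bound uniformly; equivalently, one reduces to a ``clean'' verifier whose output qubit is unentangled from its workspace and invokes the gap-robustness of postselected computation. Together with the cited fact that ${\rm QMA}(\frac12+2^{-r},\frac12-2^{-r})$ captures all of ${\rm PSPACE}$, this would also be the source of the ${\rm PSPACE}$-hardness of ${\rm QMA}_{\rm postBQP}$. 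The remaining pieces — the preprocessing, the Chernoff estimate for completeness, and checking that the postselection probability never drops below $2^{-{\rm poly}(n)}$ — are routine.
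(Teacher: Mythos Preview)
Your high-level plan---run the precise-QMA verifier inside a postselecting machine and use Aaronson-style tricks to blow up the gap---is the right intuition, and your preprocessing step (forcing $M_x'\succeq 2^{-q}I$) is fine. But the central step is not actually specified: ``Aaronson's amplification on these $k$ output qubits'' is not a well-defined procedure. Aaronson's argument in \cite{Aaronson} does \emph{not} take many noisy output bits and amplify; it takes a \emph{single} qubit whose amplitudes encode a probability, namely $|\psi\rangle\propto p|0\rangle+(1-p)|1\rangle$, and then uses one controlled-Hadamard plus a postselection and a sweep over ratios $\beta/\alpha=2^i$ to decide whether $p>1/2$ or $p<1/2$. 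Your $k$ parallel runs produce $k$ qubits each entangled with separate junk registers (and, in the soundness case, with each other through Merlin's entangled witness); there is no off-the-shelf ``sequence of controlled rotations and postselections'' that converts this into the single clean amplitude-encoding qubit Aaronson's trick needs. The operator bound $(M_x')^{\otimes|T|}\preceq\lambda_{\max}(M_x')^{|T|}I$ you mention controls \emph{joint acceptance} probabilities, but Aaronson's postselections are not of the form ``all runs in a set $T$ accepted,'' so it is not clear what inequality you would actually propagate.

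The paper sidesteps this entirely by working with a \emph{single} copy of the witness. It first runs $V_x$ once and then applies the DISTILLATION PROCEDURE of Kobayashi--Le~Gall--Nishimura \cite{KLGN15SIAM}, which uses one postselection to produce exactly the qubit $|\psi\rangle\propto p_x|0\rangle+(1-p_x)|1\rangle$ where $p_x$ is the acceptance probability of whatever witness Merlin sent. From that point Aaronson's $|+\rangle$-overlap argument applies verbatim, sweeping $\beta/\alpha=2^i$ over $i\in[-r,r]$. Soundness is then easy: for \emph{any} witness the distillation yields a mixture of such qubits with $q\le\lambda_{\max}(M_x)\le\tfrac12-2^{-r}$, and for every such $q$ the resulting state lies in the fourth quadrant, so $|\langle +|\varphi_{2^i}\rangle|\le 1/\sqrt{2}$ for every $i$. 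Only the final standard error reduction uses multiple witness copies. So the missing idea in your sketch is precisely the distillation step that turns one verifier run into a clean amplitude-encoding qubit; once you have that, the parallel-repetition and operator-norm machinery is unnecessary.
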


\begin{proof}
Let $L$ be a language in 
$\mathrm{QMA}(\frac{1}{2}+2^{-r},\frac{1}{2}-2^{-r})$ for a polynomial $r$. 
Let $V_x$ be Arthur's circuit verifying $L$. 
%Then, there is a polynomial $p$ such that $A$ has 
%a quantum Merlin-Arthur proof system $\{Q_x\}$
%with completeness $1/2+1/2^p$ and soundness $1/2-1/2^p$. 
Without loss of generality, we can assume that the maximum acceptance probability 
of $V_x$ (over quantum witnesses) is at most $1-2^{-r}$ 
(by modifying the original system so that it can be accepted and rejected 
automatically with an exponentially small probability). Let $|\varphi_x\rangle$ be a quantum witness that achieves 
the maximum acceptance probability of $V_x$. Then, we have
\[
V_x(|\varphi_x\rangle\otimes|0\rangle^{\otimes m})=
\sqrt{p_x}|0\rangle\otimes|\phi_{x,0}\rangle+
\sqrt{1-p_x}|1\rangle\otimes|\phi_{x,1}\rangle
\]
for certain $(w+m-1)$-qubit states 
$|\phi_{x,0}\rangle$ and $|\phi_{x,1}\rangle$, 
where $p_x$ is the maximum acceptance probability of $V_x$.
Now by the DISTILLATION PROCEDURE of Ref.~\cite{KLGN15SIAM} 
(see Subsection 6.1.1 in \cite{KLGN15SIAM}) we can obtain a single-qubit state
\[
|\psi\rangle = \frac{1}{\sqrt{p_x^2+(1-p_x^2)}}(p_x|0\rangle+(1-p_x)|1\rangle)
\]
using postselection with probability $p_x^2+(1-p_x)^2$ ($=2p_x^2-2p_x+1$).

The rest of the proof is similar to that of 
$\mathrm{PP}\subseteq\mathrm{postBQP}$~\cite{Aaronson}. 
Let $H$ be the Hadamard gate.
For some positive real numbers $\alpha,\beta$ to be specified later,
prepare $\alpha|0\rangle|\psi\rangle+\beta|1\rangle H|\psi\rangle$ where
\[
H|\psi\rangle= \frac{1}{\sqrt{p_x^2+(1-p_x^2)} }
\Big(\frac{1}{\sqrt{2}}|0\rangle+\frac{2p_x-1}{\sqrt{2}}|1\rangle\Big).
\]
Then postselect on the second qubit being $|1\rangle$. This gives the reduced state
\[
|\varphi_{\beta/\alpha}\rangle = \frac{\alpha(1-p_x)|0\rangle+\beta\sqrt{1/2}(2p_x-1)|1\rangle}{\sqrt{\alpha^2(1-p_x)^2+\frac{\beta^2}{2}(2p_x-1)^2 }}
\]
in the first qubit.

Suppose $x\in L$. Then, $p_x\geq 1/2+1/2^r$ (and $p_x\leq 1-1/2^r$ by the assumption). $1-p_x>0$ and $\sqrt{1/2}(2p_x-1)>0$ and hence (the pair of the two real coefficients of) $|\varphi_{\beta/\alpha}\rangle$ lies in the first quadrant. Then we claim there exists an integer $i\in [-r,r]$ such that if we set $\beta/\alpha=2^i$, then $|\varphi_{2^i}\rangle$ is close 
to $|+\rangle=\frac{1}{\sqrt{2}}(|0\rangle+|1\rangle)$: $|\langle +|\varphi_{2^i}\rangle|\geq (1+\sqrt{2})/\sqrt{6}>0.985$. 
In fact, since the ratio $\frac{\sqrt{1/2}(2p_x-1)}{1-p_x}$ lies between $1/2^r=2^{-r}$ 
and $2^r$, there must be an integer $i\in [-r,r-1]$ such that $|\varphi_{2^i}\rangle$ 
and $|\varphi_{2^{i+1}}\rangle$ fall on the opposite sides of $|+\rangle$ in the first quadrant.
Thus the worst case is that $\langle +|\varphi_{2^i}\rangle=\langle +|\varphi_{2^{i+1}}\rangle$,
which occurs when $|\varphi_{2^i}\rangle=\sqrt{2/3}|0\rangle+\sqrt{1/3}|1\rangle$ 
and $|\varphi_{2^{i+1}}\rangle=\sqrt{1/3}|0\rangle+\sqrt{2/3}|1\rangle$.

On the contrary, suppose $x\notin L$. Then, $p_x\leq 1/2-1/2^r$.
Thus, $1-p_x>0$ and $\sqrt{1/2}(2p_x-1)<0$ and hence $|\varphi_{\beta/\alpha}\rangle$ lies in the fourth quadrant. Then $|\varphi_{2^i}\rangle$ never lies in the first or the third quadrants and therefore $|\langle +|\varphi_{2^i}\rangle|\leq 1/\sqrt{2}<0.708$.
Moreover, if Merlin sends a state which does not correspond to the maximum acceptance probability $p_x$ of $V_x$, by the DISTILLATION PROCEDURE 
we obtain a mixture of states in the form of 
\[
|\psi'\rangle = \frac{1}{\sqrt{q^2+(1-q^2)}}(q|0\rangle+(1-q)|1\rangle)
\] 
where $q\leq p_x$ with postselection (as seen from the analysis of Subsection 6.1.1 in \cite{KLGN15SIAM}). Thus also in this case we can obtain the same conclusion of $|\langle +|\varphi_{2^i}\rangle|\leq 1/\sqrt{2}<0.708$.

It follows that, by repeating the whole algorithm $r(2r+1)$ times with $r$ invocations for each integer $i\in [-r,r]$, we can learn whether $x\in L$ or $x\notin L$ 
with exponentially small probability of error (by the standard analysis 
of the error reduction of QMA proof systems).
\end{proof}

Fefferman and Li~\cite{FL1,FL2} 
showed that 
\begin{eqnarray*}
\bigcup_{r:polynomial}\mathrm{QMA}\Big(\frac{1}{2}+2^{-r},\frac{1}{2}-2^{-r}\Big)
=\mathrm{PSPACE}. 
\end{eqnarray*}
Therefore, the above theorem
means 
\begin{eqnarray*}
\mathrm{PSPACE}\subseteq\mathrm{QMA}_{\rm postBQP}.
\end{eqnarray*}

Combining the two theorems, we have our first main result:
\begin{theorem}
${\rm QMA}_{\rm postBQP}={\rm PSPACE}$.
\end{theorem}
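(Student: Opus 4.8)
The final statement ${\rm QMA}_{\rm postBQP} = {\rm PSPACE}$ follows by combining the two preceding theorems with the characterization of $\mathrm{PSPACE}$ due to Fefferman and Li. My plan is to prove the two inclusions separately and then invoke the known union characterization to close the loop.

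For the inclusion ${\rm QMA}_{\rm postBQP} \subseteq {\rm PSPACE}$, I would start from Theorem~1, which gives ${\rm QMA}_{\rm postBQP} \subseteq {\rm QMA}(\frac{1}{2}+2^{-r}, \frac{1}{2}-2^{-r})$ for some polynomial $r$ (explicitly $r = s+2$, where $s$ is the postselection-probability exponent from the ${\rm QMA}_{\rm postBQP}$ machine). Then I invoke the result of Refs.~\cite{FL1,FL2,FKLMN} that $\mathrm{QMA}(\frac{1}{2}+2^{-r},\frac{1}{2}-2^{-r}) \subseteq \mathrm{PSPACE}$ for every polynomial $r$. Chaining these two containments immediately yields ${\rm QMA}_{\rm postBQP} \subseteq \mathrm{PSPACE}$.

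For the reverse inclusion $\mathrm{PSPACE} \subseteq {\rm QMA}_{\rm postBQP}$, I would use the Fefferman--Li characterization $\bigcup_{r:\,\text{polynomial}} \mathrm{QMA}(\frac{1}{2}+2^{-r},\frac{1}{2}-2^{-r}) = \mathrm{PSPACE}$. Given any language $L \in \mathrm{PSPACE}$, this characterization produces a polynomial $r$ with $L \in \mathrm{QMA}(\frac{1}{2}+2^{-r},\frac{1}{2}-2^{-r})$; then Theorem~\ref{preciseQMA in postQMA} gives $L \in {\rm QMA}_{\rm postBQP}$. Hence $\mathrm{PSPACE} \subseteq {\rm QMA}_{\rm postBQP}$, and combined with the first inclusion we conclude ${\rm QMA}_{\rm postBQP} = \mathrm{PSPACE}$.

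I do not expect any genuine obstacle here: the theorem is essentially a bookkeeping corollary, and all the real work has already been discharged in Theorems~1 and~\ref{preciseQMA in postQMA} together with the cited $\mathrm{PSPACE}$ characterization. The only points to be careful about are (i) noting that Theorem~1 supplies \emph{some} polynomial $r$ depending on the machine rather than an arbitrary one, which is harmless because the Fefferman--Li upper bound $\mathrm{QMA}(\frac{1}{2}+2^{-r},\frac{1}{2}-2^{-r}) \subseteq \mathrm{PSPACE}$ holds uniformly over all polynomials $r$; and (ii) making sure the union in the Fefferman--Li statement is exactly matched by the "for any polynomial $r$" quantifier in Theorem~\ref{preciseQMA in postQMA}, so that no polynomial appearing in one direction fails to be covered in the other. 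Both are immediate, so the proof is just the two-line combination of the already-established facts.
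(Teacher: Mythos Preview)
Your proposal is correct and matches the paper's approach exactly: the paper also derives Theorem~3 as an immediate corollary by combining Theorem~1 with the upper bound ${\rm QMA}(\frac{1}{2}+2^{-r},\frac{1}{2}-2^{-r})\subseteq{\rm PSPACE}$ from Refs.~\cite{FL1,FL2,FKLMN}, and Theorem~\ref{preciseQMA in postQMA} with the Fefferman--Li equality $\bigcup_{r}{\rm QMA}(\frac{1}{2}+2^{-r},\frac{1}{2}-2^{-r})={\rm PSPACE}$. Your care about which direction needs ``some polynomial $r$'' versus ``any polynomial $r$'' is well placed and exactly what the paper relies on implicitly.
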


\if0
\begin{theorem}
${\rm QCMA}_{\rm postBQP}\subseteq
{\rm QMA}_{\rm postBQP}$.
\end{theorem}

\begin{proof}
Let us assume that a language $L$ is in
${\rm QCMA}_{\rm postBQP}$,
and let $Q_x$ be Arthur's circuit.
We now construct a
${\rm QMA}_{\rm postBQP}$ algorithm that recognizes $L$.
The correctness is obvious. Let us consider the soundness.
By the assumption,
\begin{eqnarray*}
P_{Q_x(|i\rangle)}(o=1|p=1)\le 2^{-r(n)}
\end{eqnarray*}
for any $w(n)$-qubit computational basis state $|i\rangle$ 
$(i\in\{0,1\}^{w(n)})$.
Then, for any $w(n)$-qubit state 
$\rho=\sum_i\rho_i|i\rangle\langle i|$,
\begin{eqnarray*}
P_{Q_x(\rho)}(o=1|p=1)
&=&\frac{P_{Q_x(\rho)}(o=1,p=1)}{P_{Q_x(\rho)}(p=1)}\\
&=&\sum_i\rho_i\frac{P_{Q_x(|i\rangle)}(o=1,p=1)}
{P_{Q_x(|i\rangle)}(p=1)}
\frac{P_{Q_x(|i\rangle)}(p=1)}{P_{Q_x(\rho)}(p=1)}\\
&\le&2^{-r}\sum_i\rho_i
\frac{P_{Q_x(|i\rangle)}(p=1)}{P_{Q_x(\rho)}(p=1)}\\
&=&2^{-r}
\frac{P_{Q_x(\rho)}(p=1)}{P_{Q_x(\rho)}(p=1)}\\
&=&2^{-r}.
\end{eqnarray*}
Without loss of generality, we have only to consider only
such a computational-basis diagonal state, since any state can be
converted to it by entangling each qubit with an ancilla and
tracing out ancilla qubits.
Therefore, the soundness is also satisfied. Hence $L$ is in 
${\rm QMA}_{\rm postBQP}$.
\end{proof}
\fi

For characterizing ${\rm QCMA}_{\rm postBQP}$, 
let us recall $\exists$ operator as follows.

\begin{definition}
Let ${\rm C}$ be a class.
A language $L$ is in $\exists {\rm C}$ if and only if
there exist a language $L'\in {\rm C}$ and a polynomial $q$
such that
\begin{itemize}
\item[1.]
If $x\in L$, then there exists a string $y$ of length $q(|x|)$
such that $\langle x, y\rangle\in L'$.
\item[2.]
If $x\notin L$, then for any string $y$ of length $q(|x|)$,
$\langle x, y\rangle\notin L'$.
\end{itemize}
\end{definition}

Then, we can observe:

\begin{proposition}
$\exists {\rm PP}={\rm QCMA}_{\rm postBQP}$.
\end{proposition}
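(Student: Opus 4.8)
The plan is to prove the two inclusions $\exists{\rm PP}\subseteq{\rm QCMA}_{\rm postBQP}$ and ${\rm QCMA}_{\rm postBQP}\subseteq\exists{\rm PP}$ separately, using ${\rm postBQP}={\rm PP}$ as the bridge in both directions. For $\exists{\rm PP}\subseteq{\rm QCMA}_{\rm postBQP}$, let $L\in\exists{\rm PP}$ be witnessed by $L'\in{\rm PP}={\rm postBQP}$ and a polynomial $q$, and fix a postBQP circuit family $\{R_z\}_z$ for $L'$ whose postselection probability is at least $2^{-s}$ on every input $z$. I would let Arthur's circuit $Q_x$ read a classical witness $y\in\{0,1\}^{q(|x|)}$ and simulate $R_{\langle x,y\rangle}$. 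Then $P_{Q_x(y)}(p=1)=P_{R_{\langle x,y\rangle}}(p=1)\ge 2^{-s}$ for every $y$; if $x\in L$ there is a $y$ with $\langle x,y\rangle\in L'$, so $P_{Q_x(y)}(o=1\mid p=1)\ge 2/3$; and if $x\notin L$ then $\langle x,y\rangle\notin L'$ for every $y$, so $P_{Q_x(y)}(o=1\mid p=1)\le 1/3$ for every $y$. This matches the definition of ${\rm QCMA}_{\rm postBQP}$ verbatim.

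For ${\rm QCMA}_{\rm postBQP}\subseteq\exists{\rm PP}$, let $L\in{\rm QCMA}_{\rm postBQP}$ with circuit family $\{Q_x\}_x$ and postselection probability at least $2^{-s}$ on every classical witness. The naive candidate $L'=\{\langle x,y\rangle:P_{Q_x(y)}(o=1\mid p=1)\ge 2/3\}$ is not manifestly in ${\rm PP}$, because when $x\in L$ but $y$ is a ``bad'' witness the conditional acceptance probability is completely unconstrained; this is the main obstacle, and I would get around it by comparing the joint probabilities rather than the conditional one, defining
\[
L'=\big\{\langle x,y\rangle: P_{Q_x(y)}(o=1,p=1)>P_{Q_x(y)}(o=0,p=1)\big\}.
\]
This $L'$ is in ${\rm PP}$: for a fixed algebraic gate set, an appropriate power of two times $P_{Q_x(y)}(o=1,p=1)-P_{Q_x(y)}(o=0,p=1)$ is a ${\rm GapP}$ function of $\langle x,y\rangle$ (a difference of two restricted acceptance probabilities, each itself ${\rm GapP}$, exactly as in the proof of ${\rm postBQP}={\rm PP}$), and ${\rm PP}$ is the class of languages of the form $\{z:g(z)>0\}$ for a ${\rm GapP}$ function $g$. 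Equivalently, run $Q_x$ with $y$ hardwired, output a fair coin flip whenever $p=0$ and output $o$ whenever $p=1$; the acceptance probability of this circuit equals $\tfrac12+\tfrac12\big(P_{Q_x(y)}(o=1,p=1)-P_{Q_x(y)}(o=0,p=1)\big)$, which exceeds $\tfrac12$ precisely when $\langle x,y\rangle\in L'$.

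It then remains to verify the two $\exists$ conditions. Writing $P$ for $P_{Q_x(y)}$ and using $P(o=1\mid p=1)=\frac{P(o=1,p=1)}{P(o=1,p=1)+P(o=0,p=1)}$ together with $P(p=1)\ge 2^{-s}>0$, a one-line calculation shows that $P(o=1\mid p=1)\ge 2/3$ forces $P(o=1,p=1)\ge 2P(o=0,p=1)$, hence $P(o=1,p=1)>P(o=0,p=1)$ (the edge case $P(o=0,p=1)=0$ being fine, since then $P(o=1,p=1)=P(p=1)>0$), whereas $P(o=1\mid p=1)\le 1/3$ forces $P(o=1,p=1)\le\tfrac12 P(o=0,p=1)$, hence $P(o=1,p=1)\not>P(o=0,p=1)$. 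Therefore, if $x\in L$ the yes witness $y$ satisfies $\langle x,y\rangle\in L'$, and if $x\notin L$ then, since soundness holds for every witness and in particular every classical $y$, no $y$ puts $\langle x,y\rangle$ into $L'$. This gives $L\in\exists{\rm PP}$ and, with the first inclusion, $\exists{\rm PP}={\rm QCMA}_{\rm postBQP}$. (For reconciling this with the introduction, $\exists{\rm PP}={\rm NP}^{\rm PP}$ is standard: an ${\rm NP}^{\rm PP}$ machine may nondeterministically guess all of its oracle answers and certify them with a single ${\rm PP}$ query, using closure of ${\rm PP}$ under polynomial-time truth-table reductions.)
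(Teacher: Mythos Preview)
Your proof is correct and rests on the same single fact the paper uses, namely ${\rm postBQP}={\rm PP}$; the paper's own proof is a one-line ``obvious by the definition of $\exists{\rm PP}$ and ${\rm PP}={\rm postBQP}$.'' The only substantive addition in your write-up is that you make the inner ${\rm PP}$ language total by comparing the joint probabilities $P(o=1,p=1)$ and $P(o=0,p=1)$ rather than the conditional one, which cleanly handles witnesses that violate the $(2/3,1/3)$ promise---a point the paper leaves implicit.
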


\begin{proof}
It is obvious by the definition of $\exists{\rm PP}$
and ${\rm PP}={\rm postBQP}$.
\end{proof}

Now let us move on to 
${\rm QMA}_{\rm postBQP}^*$
and
${\rm QCMA}_{\rm postBQP}^*$.
These classes are shown to coincide with PP.

\begin{theorem}
${\rm QCMA}_{\rm postBQP}^*={\rm QMA}_{\rm postBQP}^*={\rm PP}$.
\end{theorem}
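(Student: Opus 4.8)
The plan is to establish the chain
\[
{\rm PP}={\rm postBQP}\subseteq{\rm QCMA}_{\rm postBQP}^*\subseteq{\rm QMA}_{\rm postBQP}^*\subseteq{\rm PP},
\]
of which only the last inclusion is substantial.

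For ${\rm postBQP}\subseteq{\rm QCMA}_{\rm postBQP}^*$ I would take a postBQP circuit for $L$ and adjoin a witness register that the circuit never touches; then the postselection probability is literally independent of the witness (hence the same for every witness), and completeness and soundness are inherited. For ${\rm QCMA}_{\rm postBQP}^*\subseteq{\rm QMA}_{\rm postBQP}^*$ I would take a ${\rm QCMA}_{\rm postBQP}^*$ verifier $Q_x$ and pre-compose it with the dephasing channel on the witness register (CNOT each witness qubit into a fresh ancilla, then discard the ancilla). Since the original verifier has postselection probability $\pi_x$ on \emph{every} classical witness $y$, the new verifier has postselection probability $\sum_y\langle y|\xi|y\rangle\,\pi_x=\pi_x$ on \emph{every} $w$-qubit state $\xi$, and its conditional acceptance probability on $\xi$ equals $\sum_y\langle y|\xi|y\rangle\,P_{Q_x(y)}(o=1|p=1)$; hence completeness (take $\xi=|y^\ast\rangle\langle y^\ast|$ for the good classical witness) and soundness ($\le1/3$ termwise) are preserved.

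The heart of the argument is ${\rm QMA}_{\rm postBQP}^*\subseteq{\rm PP}$. Given a verifier $Q_x$ with witness length $w=w(n)$ and $m$ ancillas, set $\Pi=|1\rangle\langle1|_o\otimes|1\rangle\langle1|_p\otimes I$ and $\Pi'=I_o\otimes|1\rangle\langle1|_p\otimes I$, and define the $w$-qubit operators $B_x=\langle 0|^{\otimes m}Q_x^\dagger\,\Pi\,Q_x|0\rangle^{\otimes m}$ and $B'_x=\langle 0|^{\otimes m}Q_x^\dagger\,\Pi'\,Q_x|0\rangle^{\otimes m}$, so that $P_{Q_x(\xi)}(o=1,p=1)={\rm Tr}[B_x\xi]$ and $P_{Q_x(\xi)}(p=1)={\rm Tr}[B'_x\xi]$. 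The starred condition forces ${\rm Tr}[B'_x\xi]$ to be a constant $\pi_x\ge2^{-s(n)}$, i.e. $B'_x=\pi_x I$, and $0\le B_x\le B'_x=\pi_x I$. Hence the conditional acceptance probability on witness $\xi$ is the \emph{linear} functional ${\rm Tr}[B_x\xi]/\pi_x$, whose maximum over all witness states is $\lambda_{\max}(B_x)/\pi_x$; so $x\in L$ iff $\lambda_{\max}(B_x)\ge\tfrac23\pi_x$, while $x\notin L$ implies $\lambda_{\max}(B_x)\le\tfrac13\pi_x$. Since $B_x$ is positive semidefinite on a $2^w$-dimensional space, $\lambda_{\max}(B_x)^{2w}\le{\rm Tr}[B_x^{2w}]\le2^w\lambda_{\max}(B_x)^{2w}$, which widens this to $x\in L\Rightarrow{\rm Tr}[B_x^{2w}]\ge(4/9)^w\pi_x^{2w}$ and $x\notin L\Rightarrow{\rm Tr}[B_x^{2w}]\le(2/9)^w\pi_x^{2w}$. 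So it suffices to decide whether $3^w\,{\rm Tr}[B_x^{2w}]\ge\pi_x^{2w}$, which is strictly true in the yes case and strictly false in the no case. Finally, up to a fixed normalization $2^{{\rm poly}(n)}$, both ${\rm Tr}[B_x^{2w}]$ and $\pi_x$ are ${\rm GapP}$ functions of $x$: each is a finite sum of products of transition amplitudes of the uniform polynomial-size circuit $Q_x$, such amplitudes are ${\rm GapP}$ up to that normalization by the standard path-sum argument (as in the proof of ${\rm postBQP}={\rm PP}$~\cite{Aaronson}), and ${\rm GapP}$ is closed under the polynomially-many products and exponentially-many sums involved in forming $3^w\,{\rm Tr}[B_x^{2w}]-\pi_x^{2w}$. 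Comparing two ${\rm GapP}$ functions is a ${\rm PP}$ predicate, so $L\in{\rm PP}$, closing the chain.

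The step I expect to be the main obstacle is this last inclusion, and its crux is recognizing that the starred condition is precisely what linearizes the witness optimization: with constant postselection probability the conditional acceptance probability becomes a linear functional of $\xi$, so the optimal witness is a largest eigenvector of the efficiently described operator $B_x$, and the $2^w$ slack in the trace sandwich is comfortably absorbed by raising the $2/3$ versus $1/3$ separation to the $2w$-th power. Without the starred hypothesis the conditional acceptance probability is a \emph{ratio} of two $\xi$-dependent quantities (and the corresponding class is ${\rm PSPACE}$, as shown earlier), so the starred condition is doing essential work.
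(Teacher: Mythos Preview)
Your proof is correct but takes a different route from the paper for the substantive inclusion ${\rm QMA}_{\rm postBQP}^*\subseteq{\rm PP}$. The paper stays operational: it applies Marriott--Watrous AND-repetition to $Q_x$ to build a circuit $R_x$ with $P_{R_x(\psi)}(o{=}1,p{=}1)\ge(2/3)^k\pi_x^k$ on the good witness and $\le(1/3)^k\pi_x^k$ on every witness in the no case, then feeds $R_x$ the maximally mixed witness $I^{\otimes w}/2^w$ (losing a factor $2^{-w}$, absorbed by taking $k=w{+}2$), and concludes via Kuperberg's definition of ${\rm postBQP}={\rm PP}$. You instead work algebraically with the operator $B_x$, using ${\rm Tr}[B_x^{2w}]$ to sandwich $\lambda_{\max}(B_x)^{2w}$ and finishing with an explicit GapP comparison. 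The underlying mechanism is the same---powering by $O(w)$ so that the $(2/3)$-versus-$(1/3)$ gap overwhelms the $2^w$ slack from replacing the optimal witness by the maximally mixed one---but the packaging differs: the paper avoids the GapP bookkeeping by leaning on Kuperberg's characterization, whereas your argument is essentially the GapP route of Usher--Hoban--Browne that the paper cites in its closing note. Your explicit handling of ${\rm QCMA}_{\rm postBQP}^*\subseteq{\rm QMA}_{\rm postBQP}^*$ via dephasing the witness is a clean addition that the paper leaves implicit.
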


\begin{proof}
First,
${\rm PP}={\rm postBQP}\subseteq{\rm QCMA}_{\rm postBQP}^*$
is trivial, since Arthur has only to ignore the witness.

Next, we show
${\rm QMA}_{\rm postBQP}^*\subseteq{\rm PP}$.
Let us assume that a language $L$ is in ${\rm QMA}_{\rm postBQP}^*$,
and let $Q_x$ be Arthur's quantum circuit that recognizes $L$.
If $x\in L$, then there exists a $w(n)$-qubit pure state $\psi$ such that
\begin{eqnarray*}
P_{Q_x(\psi)}(o=1,p=1)&\ge& \frac{2}{3}P_{Q_x(\psi)}(p=1).
\end{eqnarray*}
By using the technique of Ref.~\cite{MW}
(more precisely, the AND-Repetition procedure in Ref.~\cite{FKLMN}), 
we can construct
for any $k$ a circuit $R_x$ such that
\begin{eqnarray*}
P_{R_x(\psi)}(o=1,p=1)&\ge&\Big(\frac{2}{3}\Big)^kP_{Q_x(\psi)}(p=1)^k.
\end{eqnarray*}
Therefore,
\begin{eqnarray*}
P_{R_x(I^w/2^w)}(o=1,p=1)&\ge&2^{-w}P_{R_x(\psi)}(o=1,p=1)\nonumber\\
&\ge&2^{-w}\Big(\frac{2}{3}\Big)^kP_{Q_x(\psi)}(p=1)^k\nonumber\\
&=&2^{-w}\Big(\frac{2}{3}\Big)^kP_{Q_x(I^w/2^w)}(p=1)^k\nonumber\\
&=&\frac{4}{3}\times\frac{1}{3^{w+1}}P_{Q_x(I^w/2^w)}(p=1)^{w+2}\label{eq1},
\end{eqnarray*}
where we have taken $k=w+2$.

On the other hand, if $x\notin L$, 
\begin{eqnarray*}
P_{Q_x(\xi)}(o=1,p=1)&\le& \frac{1}{3}P_{Q_x(\xi)}(p=1)
\end{eqnarray*}
for any state $\xi$,
and therefore
\begin{eqnarray*}
P_{R_x(I^w/2^w)}(o=1,p=1)&\le&
\Big(\frac{1}{3}\Big)^kP_{Q_x(I^w/2^w)}(p=1)^k\nonumber\\
&=&\frac{1}{3}\times\frac{1}{3^{w+1}}P_{Q_x(I^w/2^w)}(p=1)^{w+2}\nonumber\\
&\le&\frac{3}{4}\times\frac{1}{3^{w+1}}P_{Q_x(I^w/2^w)}(p=1)^{w+2}\label{eq2}.
\end{eqnarray*}
Therefore, due to the definition of postBQP by Kuperberg~\cite{Kuperberg},
$L$ is in ${\rm postBQP}={\rm PP}$.

\if0
Let us consider the following circuit $R_x$:
\begin{itemize}
\item[1.]
Flip a coin. 
\item[2.]
If heads, $R_x$ simulates $Q_x'$ on $I^w/2^w$.
If it outputs $o=1$ and $p=1$, $R_x$ outputs $o=1$ and $p=1$.
Otherwise, $R_x$ outputs $o=1$ and $p=0$.
\item[3.]
If tails, $R_x$ simulates $Q_x$ on $I^w/2^w$ in $w+2$ times.
If all of them outputs $p=1$, $R_x$ outputs $o=0$ and $p=1$ with probability
$3^{-w-1}$.
Otherwise, $R_x$ outputs $o=0$ and $p=0$.
\end{itemize}
Then,
\begin{eqnarray*}
P_{R_x}(o=1|p=1)&=&\frac{P_{R_x}(o=1,p=1)}{P_{R_x}(p=1)}\\
&=&\frac{\frac{1}{2}P_{Q_x'(I^w/2^w)}(o=1,p=1)}
{\frac{1}{2}P_{Q_x'(I^w/2^w)}(o=1,p=1)+\frac{1}{2}3^{-w-1}P_{Q_x(I^w/2^w)}(p=1)^{w+2}}\\
&=&\left\{
\begin{array}{cc}
\ge\frac{4}{7}&(x\in L)\\
\le\frac{3}{7}&(x\notin L).
\end{array}
\right.
\end{eqnarray*}
Therefore, $L$ is in postBQP.
\fi

\end{proof}

It is known that $\exists {\rm PP}={\rm NP}^{\rm PP}$~\cite{Toran}.
Therefore,
from Toda's theorem \cite{Toda}, 
${\rm QCMA}_{\rm postBQP}=\exists {\rm PP}={\rm NP}^{\rm PP}$
contains the polynomial hierarchy.
On the other hand,
${\rm Q(C)MA}_{\rm postBQP}^*={\rm PP}$.
Therefore, it seems that
${\rm Q(C)MA}_{\rm postBQP}^*
\neq {\rm QCMA}_{\rm postBQP}$.

We next consider
the Merlinizations of SBQP,
${\rm QMA}_{\rm SBQP}$
and
${\rm QCMA}_{\rm SBQP}$.

\begin{theorem}
\label{result1}
${\rm QCMA}_{\rm SBQP}={\rm QMA}_{\rm SBQP}={\rm SBQP}$.
\end{theorem}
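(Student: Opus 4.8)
The plan is to prove the cycle of inclusions
${\rm SBQP}\subseteq{\rm QCMA}_{\rm SBQP}\subseteq{\rm QMA}_{\rm SBQP}\subseteq{\rm SBQP}$, which yields all three equalities at once. The first inclusion is immediate: an SBQP circuit is a ${\rm QCMA}_{\rm SBQP}$ verifier that ignores Merlin's message. For ${\rm QCMA}_{\rm SBQP}\subseteq{\rm QMA}_{\rm SBQP}$ I would let the ${\rm QMA}_{\rm SBQP}$ verifier first measure the $w(n)$ witness qubits in the computational basis and then run the given ${\rm QCMA}_{\rm SBQP}$ verifier on the resulting string: an honest computational-basis witness survives this measurement unchanged, while an arbitrary quantum witness $\xi$ is dephased into $\sum_y\langle y|\xi|y\rangle\,|y\rangle\langle y|$, so the acceptance probability becomes a convex combination of the per-string acceptance probabilities and hence stays at most $2^{-r(n)-1}$ when $x\notin L$.

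The substantive direction is ${\rm QMA}_{\rm SBQP}\subseteq{\rm SBQP}$. Given a ${\rm QMA}_{\rm SBQP}$ verifier $Q_x$ with witness length $w$ and error bound $(2^{-r},2^{-r-1})$, I would first apply the witness-size-preserving ``AND''-amplification recorded in the remark following the definition of ${\rm QMA}_{\rm SBQP}$ (the MW/FKLMN technique, Refs.~\cite{MW,FKLMN}) to obtain a verifier $\tilde Q_x$ with the \emph{same} witness length $w$ and error bound $(2^{-rk},2^{-rk-k})$, where $k:=w+1$. Since the final acceptance probability is linear in the input density matrix, we may write $P_{\tilde Q_x(\xi)}(o=1)={\rm Tr}[\xi M_x]$ for a positive semidefinite operator $M_x$ on the $w$-qubit witness space with $0\preceq M_x\preceq I$. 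The SBQP machine I would construct takes no witness; it loads the maximally mixed state $I^{\otimes w}/2^w$ into the witness register (equivalently, it runs $\tilde Q_x$ on a uniformly random $w$-bit string) and outputs $\tilde Q_x$'s verdict, so it accepts with probability $\frac{1}{2^w}{\rm Tr}[M_x]$. If $x\in L$ there is a pure witness $\psi$ with ${\rm Tr}[\psi M_x]\ge 2^{-rk}$, hence $\lambda_{\max}(M_x)\ge 2^{-rk}$, and since $M_x\succeq 0$ we get $\frac{1}{2^w}{\rm Tr}[M_x]\ge\frac{1}{2^w}\lambda_{\max}(M_x)\ge 2^{-rk-w}$. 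If $x\notin L$ then $\langle i|M_x|i\rangle=P_{\tilde Q_x(|i\rangle)}(o=1)\le 2^{-rk-k}$ for every computational-basis string $i$, so $\frac{1}{2^w}{\rm Tr}[M_x]=\frac{1}{2^w}\sum_i\langle i|M_x|i\rangle\le 2^{-rk-k}$. With $k=w+1$ these two bounds read $\ge 2^{-(r(w+1)+w)}$ versus $\le 2^{-(r(w+1)+w)-1}$, which is exactly the SBQP promise with error exponent $r'(n):=r(n)(w(n)+1)+w(n)$, a polynomial in $n$.

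The point requiring care is the ordering ``amplify first, then remove the witness''. Feeding the maximally mixed state directly into the unamplified verifier costs a factor $2^{-w}$ in the accepting case, so its acceptance probability could drop below the no-instance bound $2^{-r-1}$; the argument works only because the soundness gap is first boosted to $2^{-rk-k}$ with $k>w$, which in turn relies on the AND-amplification not enlarging the witness register (a controlled polynomial blow-up of $w$ would still be tolerable, but it must be bounded before $k$ is fixed). The second delicate step is the inequality $\frac{1}{2^w}{\rm Tr}[M_x]\ge\frac{1}{2^w}\lambda_{\max}(M_x)$ — that the maximally mixed input still recovers a $2^{-w}$ fraction of the best pure-witness acceptance probability. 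This is where one must use the genuinely mixed state $I^{\otimes w}/2^w$ and not the uniform superposition $|{+}\rangle^{\otimes w}$, whose acceptance probability involves the off-diagonal entries of $M_x$ and admits no such lower bound.
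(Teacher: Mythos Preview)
Your proof is correct and follows essentially the same route as the paper: first use the witness-preserving AND-amplification (with $k=w+1$) to widen the soundness gap, then replace the witness by the maximally mixed state $I^{\otimes w}/2^w$ and verify the SBQP promise via the inequalities $2^{-w}\lambda_{\max}(M_x)\le 2^{-w}{\rm Tr}[M_x]$ and ${\rm Tr}[M_x]/2^w\le\max_i\langle i|M_x|i\rangle$. Your exposition is in fact more careful than the paper's --- you spell out the inclusion ${\rm QCMA}_{\rm SBQP}\subseteq{\rm QMA}_{\rm SBQP}$ via dephasing and explicitly flag the need to amplify \emph{before} substituting the maximally mixed state --- but the underlying argument is the same.
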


\begin{proof}
${\rm SBQP}\subseteq{\rm QCMA}_{\rm SBQP}$ is obvious.
Let us show
${\rm QMA}_{\rm SBQP}\subseteq{\rm SBQP}$.
Let us assume that a language $L$ is in ${\rm QMA}_{\rm SBQP}$, and let
$Q_x$ be Arthur's circuit that recognizes $L$.
Let $w$, $m$ and $r$ be the polynomials determined from the definition
of ${\rm QMA}_{\rm SBQP}$.
We construct an SBQP algorithm that recognizes $L$.
In our SBQP algorithm, we run $Q_x$ on 
$\frac{I^{\otimes w}}{2^w}\otimes |0\rangle\langle0|^{\otimes m}$.
%The acceptance probability $p_{acc}$ is
%\begin{eqnarray*}
%p_{acc}\equiv\mbox{Tr}\Big[
%(|1\rangle\langle1|\otimes I^{\otimes w+m-1})
%Q_x\Big(
%\frac{I^{\otimes w}}{2^w}\otimes |0\rangle\langle0|^{\otimes m}
%\Big)Q_x^\dagger
%\Big].
%\end{eqnarray*}
If $x\in L$, for any $k$,
\begin{eqnarray*}
P_{Q_x(I^w/2^w)}(o=1)&\ge& 
2^{-w}P_{Q_x(\psi)}(o=1)\\
&\ge&
2^{-w-rk}.
\end{eqnarray*}
If $x\notin L$,
\begin{eqnarray*}
P_{Q_x(I^w/2^w)}(o=1)\le 2^{-rk-k}.
\end{eqnarray*}
Therefore,
if we take $k=w+1$, we obtain
\begin{eqnarray*}
P_{Q_x(I^w/2^w)}(o=1)
\left\{
\begin{array}{ll}
\ge 2^{-w-r(w+1)}&(x\in L)\\
\le 2^{-1}2^{-w-r(w+1)}&(x\notin L),
\end{array}
\right.
\end{eqnarray*}
which means that $L$ is in SBQP.
\end{proof}

Finally, let us consider the Merlinized version of
${\rm postBQP}_{\rm FP}$.

\begin{theorem}
${\rm QMA}_{{\rm postBQP}_{\rm FP}}\subseteq{\rm SBQP}$.
\end{theorem}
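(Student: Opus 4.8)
The plan is to hard-wire the maximally mixed state as the witness of Arthur's $\mathrm{QMA}_{\mathrm{postBQP}_{\mathrm{FP}}}$ circuit and to declare the joint event $\{o=1\wedge p=1\}$ to be the acceptance event of an SBQP machine; the key point is that the $\mathrm{postBQP}_{\mathrm{FP}}$ restriction clamps the postselection probability to the fixed value $2^{-s}$, so this joint probability behaves like an honest unconditional acceptance probability. Concretely, suppose $L\in\mathrm{QMA}_{\mathrm{postBQP}_{\mathrm{FP}}}$ via a circuit family $\{Q_x\}_x$ with parameters $w,m,s$. First I would amplify the completeness/soundness gap to $(1-2^{-r},2^{-r})$ with $r=w+2$ using the standard error reduction mentioned in the definition; this increases the witness length but keeps the postselection probability a fixed inverse power of two, so the amplified circuit is still of $\mathrm{QMA}_{\mathrm{postBQP}_{\mathrm{FP}}}$ type (I return to this below). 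Relabel the amplified circuit as $Q_x$ with parameters $w,s$. The SBQP circuit $R_x$ then prepares $w$ EPR pairs out of $|0\rangle$ ancillas, feeds one half of each pair into the $w$-qubit witness register of $Q_x$ so that the state seen by $Q_x$ is $I^{\otimes w}/2^w$, runs $Q_x$, and accepts iff $Q_x$ reports $o=1$ and $p=1$.

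If $x\in L$, there is a pure witness $|\psi\rangle$ with $P_{Q_x(\psi)}(o=1\mid p=1)\ge 1-2^{-r}$, and since $I^{\otimes w}/2^w-2^{-w}|\psi\rangle\langle\psi|\ge 0$ and $P_{Q_x(\cdot)}(o=1,p=1)$ is linear and nonnegative on states,
\[
P_{Q_x(I^w/2^w)}(o=1,p=1)\ \ge\ 2^{-w}\,P_{Q_x(\psi)}(o=1,p=1)\ =\ 2^{-w}\,P_{Q_x(\psi)}(o=1\mid p=1)\,2^{-s}\ \ge\ 2^{-w}(1-2^{-r})\,2^{-s}.
\]
If $x\notin L$, then for every witness, in particular for $I^{\otimes w}/2^w$, we have $P_{Q_x(I^w/2^w)}(o=1\mid p=1)\le 2^{-r}$, while the $\mathrm{postBQP}_{\mathrm{FP}}$ constraint forces $P_{Q_x(I^w/2^w)}(p=1)=2^{-s}$ exactly, so
\[
P_{Q_x(I^w/2^w)}(o=1,p=1)\ \le\ 2^{-r}\,2^{-s}.
\]
Substituting $r=w+2$ gives acceptance probability at least $\tfrac34\,2^{-(w+s)}$ when $x\in L$ and at most $\tfrac14\,2^{-(w+s)}$ when $x\notin L$; by the flexibility of the SBQP error bounds noted in the definition (take $a=3/4$, $b=1/4$, scale $2^{-(w+s)}$) this shows $L\in\mathrm{SBQP}$.

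The step I expect to be the crux is the soundness bound, and it is exactly where the $\mathrm{FP}$ restriction is indispensable: without knowing that $P_{Q_x(I^w/2^w)}(p=1)$ is the fixed quantity $2^{-s}$, the estimate $P(o=1,p=1)\le 2^{-r}P(p=1)$ is worthless, since $P(p=1)$ could be as large as $1$ and would then dominate the completeness value $\approx 2^{-(w+s)}$. It is only because the postselection probability is pinned at $2^{-s}$, a mere factor $2^{w}$ above that completeness value, that shrinking the conditional gap by $2^{-r}$ with $r$ slightly larger than $w$ recreates a genuine multiplicative separation. The remaining thing to verify carefully is that amplification preserves the clamped postselection probability. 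This holds because the hypothesis $P_{Q_x(\xi)}(p=1)=2^{-s}$ for all $\xi$ is equivalent to $\langle 0|^{\otimes m}\bigl(Q_x^{\dagger}(|1\rangle\langle 1|_{p}\otimes I)Q_x\bigr)|0\rangle^{\otimes m}=2^{-s}I$ on the witness register; hence for $k$ parallel copies on disjoint registers, postselecting on the AND of the $p_i$ succeeds with probability $\langle\Psi|(2^{-s}I)^{\otimes k}|\Psi\rangle=2^{-sk}$ for every (possibly entangled) joint witness $\Psi$, so the amplified verifier again satisfies the $\mathrm{postBQP}_{\mathrm{FP}}$ clamping with $s'=sk$.
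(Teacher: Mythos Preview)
Your overall plan---replace the witness by the maximally mixed state and treat $\{o=1,p=1\}$ as the SBQP acceptance event---is exactly the right one, and your observation that the $\mathrm{postBQP}_{\mathrm{FP}}$ clamping is what makes the soundness side usable is correct. However, the amplification step as you have written it is circular and the argument does not close.

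You amplify to error $2^{-r}$ by ``standard error reduction,'' which you explicitly acknowledge increases the witness length, and you then relabel the new witness length as $w$. But getting error $2^{-r}$ by $k$-fold parallel repetition with majority vote requires $k=\Theta(r)$ copies, so the relabeled witness length is $w=\Theta(r\,w_{\mathrm{old}})$. Your final substitution needs $r=w+2$, i.e.\ $r=\Theta(r\,w_{\mathrm{old}})+2$, which has no solution once $w_{\mathrm{old}}\ge 1$. Concretely, after amplification the yes-case bound is roughly $2^{-w'}\cdot 2^{-s'}$ with $w'=\Theta(r\,w_{\mathrm{old}})$, while the no-case bound is $2^{-r}\cdot 2^{-s'}$; since $w'\gg r$, the no-case probability is exponentially \emph{larger} than the yes-case one, so no SBQP gap survives. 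The $2^{-w}$ penalty paid for substituting $I^{\otimes w}/2^w$ must be beaten by the soundness suppression, and parallel repetition cannot do this because the penalty and the suppression scale together.

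The paper's proof resolves exactly this issue by using a witness-preserving amplification (Marriott--Watrous / the AND-Repetition of \cite{FKLMN}) in place of parallel repetition: one builds a circuit $R_x$ on the \emph{same} $w$-qubit witness such that $P_{R_x(\psi)}(o=1,p=1)\ge(2/3)^k2^{-ks}$ in the yes case and $P_{R_x(\xi)}(o=1,p=1)\le(1/3)^k2^{-ks}$ in the no case. Because $w$ is unchanged, taking $k=w+1$ and replacing the witness by $I^{\otimes w}/2^w$ now gives a genuine factor-two separation ($2\cdot 3^{-w-1}2^{-(w+1)s}$ versus $3^{-w-1}2^{-(w+1)s}$), and SBQP membership follows. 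The essential content you are missing is that amplification here must be done \emph{without} enlarging the witness register.
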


Note that previous theorem shows ${\rm QMA}_{\rm SBQP}={\rm SBQP}$
and it is known that ${\rm postBQP}_{\rm FP}\subseteq{\rm SBQP}$.
Therefore, one might think that the relation
${\rm QMA}_{\rm postBQP_{\rm FP}}\subseteq{\rm SBQP}$
is trivially derived
from these two facts. However, an inclusion relation for
the verifier does not necessarily mean that for the language class,
and therefore we provide a proof below.

\begin{proof}
We assume that a language $L$ is in ${\rm QMA}_{{\rm postBQP}_{\rm FP}}$.
Let $Q_x$ be Arthur's circuit that recognizes $L$.
Then, if $x\in L$, there exist a polynomial $s$
and a $w(n)$-qubit pure state $\psi$ such that
\begin{eqnarray*}
P_{Q_x(\psi)}(o=1,p=1)\ge\frac{2}{3}2^{-s},
\end{eqnarray*}
and if $x\notin L$, 
\begin{eqnarray*}
P_{Q_x(\xi)}(o=1,p=1)\le\frac{1}{3}2^{-s},
\end{eqnarray*}
for any $w(n)$-qubit state $\xi$.

Now we construct an SBQP algorithm that recognizes $L$. For the goal,
we consider the new circuit $R_x$ that can amplify the error bound
without changing the witness size $w$ by using the Marriott-Watrous
technique~\cite{MW} 
(or the AND-Repetition procedure of Ref.~\cite{FKLMN}): for any integer $k$,
if $x\in L$, there exists a $w(n)$-qubit state $\psi$ such that
\begin{eqnarray*}
P_{R_x(\psi)}(o=1,p=1)\ge\Big(\frac{2}{3}\Big)^k2^{-ks},
\end{eqnarray*}
and if $x\notin L$, 
\begin{eqnarray*}
P_{R_x(\xi)}(o=1,p=1)\le\Big(\frac{1}{3}\Big)^k2^{-ks},
\end{eqnarray*}
for any $w(n)$-qubit state $\xi$.

If we run $R_x$ on 
$\frac{I^{\otimes w}}{2^w}\otimes|0\rangle\langle0|^{\otimes m}$,
where $m$ is the number of ancilla qubits,
we obtain if $x\in L$,
\begin{eqnarray*}
P_{R_x(I^w/2^w)}(o=1,p=1)&\ge& 
2^{-w}P_{R_x(\psi)}(o=1,p=1)\\
&\ge&2^{-w}\Big(\frac{2}{3}\Big)^k2^{-ks}\\
&=&2^{-w}\Big(\frac{2}{3}\Big)^{w+1}2^{-(w+1)s}\\
&=&2\times{3}^{-w-1}2^{-(w+1)s},
\end{eqnarray*}
and if $x\notin L$,
\begin{eqnarray*}
P_{R_x(I^w/2^w)}(o=1,p=1)&\le& 
\Big(\frac{1}{3}\Big)^k2^{-ks}\\
&=&3^{-w-1}2^{-(w+1)s},
\end{eqnarray*}
where we have taken $k=w+1$. 
Hence, $L$ is in SBQP.
\end{proof}

\section{Note added}
After completing the draft,
we have noticed the paper by Usher, Hoban, and 
Browne~\cite{UsherHobanBrowne}.
The class, postQMA, defined by them is the same as
our class, QMA$_{\rm postBQP}$. Although they remain
the upperbound and lowerbound of postQMA as open,
we here show that it is equal to PSPACE.
The class, postQMA$^*$, defined by them is also the same as
our class, QMA$_{\rm postBQP}^*$.
They show that postQMA$^*$ is in PP by using GapP functions,
while we here show that QMA$_{\rm postBQP}^*$ is in PP by using
another definition of PP by Kuperberg~\cite{Kuperberg}.

\acknowledgements
We thank Hirotada Kobayashi for helpful discussion.
TM is supported by JST ACT-I, 
the JSPS Grant-in-Aid for Young Scientists (B) No.26730003 and
No.17K12637,
and the MEXT JSPS Grant-in-Aid for Scientific Research on
Innovative Areas No.15H00850.
HN is supported by the JSPS Grant-in-Aid for Scientific Research (A) 
Nos.26247016, 16H01705 and (C) No.16K00015,
and the MEXT JSPS Grant-in-Aid for Scientific Research on
Innovative Areas No.24106009.

\end{document}